\newtheorem{theorem}{Theorem}
\newtheorem{lemma}{Lemma}
\newtheorem{corollary}{Corollary}
\newcommand{\paren}[1]{\left(#1\right)}
\newcommand{\bracket}[1]{\left[#1\right]}
\renewcommand{\brace}[1]{\left\{#1\right\}}
\newcommand{\Z}{\mathbb{Z}}
\newcommand{\ZM}[1]{\Z/#1\Z}
\newcommand{\F}{\mathbb{F}}
\newcommand{\M}[1]{\begin{bmatrix}#1\end{bmatrix}}
\newcommand{\MA}[2]{\mleft[\begin{array}{#1}#2\end{array}\mright]}
\newcommand{\casew}[1]{\begin{cases}#1&\textrm{else}\end{cases}}
\newcommand{\rank}[1]{\mathrm{rank}\paren{#1}}
\newcommand{\rowspan}[1]{\mathrm{rowspan}\paren{#1}}
\newcommand{\rref}[1]{\mathrm{rref}\paren{#1}}
\newcommand{\GL}[2]{\mathrm{GL}\paren{#1,\ #2}}
\newcommand{\NAE}[1]{\mathrm{NAE}\paren{#1}}
\renewcommand{\~}[1]{\widetilde{#1}}
\renewcommand{\vec}[1]{\overrightarrow{#1}}
\renewcommand{\min}[1]{\mathrm{min}\paren{#1}}
\title{Low-Rank Tensor Decomposition over Finite Fields}
\author{Jason Yang}
\date{}
\begin{document}

\maketitle

\begin{abstract}
We show that finding rank-$R$ decompositions of a 3D tensor, for $R\le 4$, over a fixed finite field can be done in polynomial time.
However, if some cells in the tensor are allowed to have arbitrary values, then rank-2 is NP-hard over the integers modulo 2.
We also explore rank-1 decomposition of a 3D tensor and of a matrix where some cells are allowed to have arbitrary values.
\end{abstract}

\section{Introduction}
Given a $p\times q\times s$ tensor (multidimensional array) $T$, a rank-$R$ decomposition is a collection of variables $A_{a,b},B_{c,d},C_{e,f}$ such that

\[T_{i,j,k}=\sum_{r=0}^{R-1} A_{r,i}B_{r,j}C_{r,k} \ \forall i,j,k.\]

This equation is also written as $T=\sum_{r=0}^{R-1} A_{r,:}\times B_{r,:}\times C_{r,:}$, where $M_{r,:}$ denotes the $r$-th row of $M$ and $\times$ denotes the outer product: for arbitrary tensors $X$ and $Y$, $(X\times Y)_{i_0,\dots i_{k-1}, j_0,\dots j_{\ell-1}}
:=X_{i_0,\dots i_{k-1}}Y_{j_0,\dots j_{\ell-1}}$.

Tensor decomposition is at the heart of fast matrix multiplication, a problem that is the primary bottleneck of numerous linear algebra and graph combinatorics algorithms, such as matrix inversion and triangle detection. All asymptotically fast algorithms for matrix multiplication use a divide-and-conquer scheme, and finding an efficient scheme is equivalent to finding a decomposition of a certain tensor with low rank \cite{survey}.

While significant progress has been made on this problem over the decades, most of the asymptotically fastest algorithms discovered use indirect constructions and convergence arguments that result in enormously large divide-and-conquer schemes, making the constant factors for these algorithms far too big for them to be useful in practice.
For this reason, there has also been interest in finding explicit decompositions for the tensors of small divide-and-conquer schemes. With the introduction of AlphaTensor \cite{alphatensor} and flip graphs \cite{flip} just over a year ago, several new low-rank decompositions have been discovered, especially over finite fields of characteristic 2 \cite{adaflip}.

Decompositions using only integers are especially preferred, since they avoid having to use floating point arithmetic, and most decompositions of small divide-and-conquer schemes that have been found have happened to use only integers. This makes it useful to search for decompositions over finite fields (e.g., integers modulo 2), as doing so makes the search space finite and allows one to rule out large classes of integer solutions.

It is known that deciding whether a given tensor has a decomposition of rank $\le R$ is NP-hard over finite fields via a reduction from 3SAT \cite{hardness}. However, since we only want low-rank decompositions, we are more interested in hardness results for small fixed $R$.

\subsection{Results}
Our main result in this report is a polynomial-time algorithm for 3D tensor decomposition with rank $\le 4$.
For simplicity of asymptotic analysis, we focus on $n\times n\times n$ tensors $T$, although our algorithm easily generalizes to non-cube tensors.

\begin{theorem}
\label{main}
For fixed $R\le 4$, finding a rank-$R$ decomposition over a finite field $\F$ of an $n\times n\times n$ tensor, or determining that it does not exist, can be done in $O(n^3+f(|\F|,R)n^2)$ time, for some function $f$.
\end{theorem}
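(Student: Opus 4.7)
The plan is to compress $T$ along the column spaces of its three mode unfoldings to a constant-size tensor $\~T$ of shape at most $R\times R\times R$, brute-force the decomposition of $\~T$, and lift back to a decomposition of $T$.

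For each mode $\ell\in\{1,2,3\}$, I would compute the column span $U_\ell\subseteq\F^n$ of the mode-$\ell$ unfolding of $T$ in $O(n^3)$ time by incremental Gaussian elimination over the $n^2$ columns. A direct examination of the unfolded decomposition $T=\sum_r A_{r,:}\otimes B_{r,:}\otimes C_{r,:}$ shows that the row span of $A$ must contain $U_1$ (and analogously for $B, C$), so if any $r_\ell:=\dim U_\ell > R$ I reject immediately. Otherwise I fix row bases $A_0, B_0, C_0$ as $r_\ell\times n$ full-row-rank matrices. Since every mode-$\ell$ fiber of $T$ lies in $U_\ell$, the tensor $T$ lies in $U_1\otimes U_2\otimes U_3$, so there is a unique compressed tensor $\~T\in\F^{r_1\times r_2\times r_3}$ with
\[T_{i,j,k}=\sum_{s_1,s_2,s_3}\~T_{s_1,s_2,s_3}(A_0)_{s_1,i}(B_0)_{s_2,j}(C_0)_{s_3,k},\]
computable in $O(n^3)$ time by contracting $T$ with left-inverses $L_\ell$ of $A_0^T, B_0^T, C_0^T$ mode by mode (the first contraction dominates).

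The central lemma is that rank-$R$ decompositions of $T$ correspond to rank-$R$ decompositions of $\~T$. Given a decomposition $(A', B', C')$ of $\~T$ with factors of sizes $R\times r_\ell$, the triple $(A'A_0, B'B_0, C'C_0)$ decomposes $T$ by direct substitution into the above identity. Conversely, given a decomposition $(A, B, C)$ of $T$, the triple $(AL_1^T, BL_2^T, CL_3^T)$ decomposes $\~T$: substituting into $\~T_{s_1,s_2,s_3}=\sum_{i,j,k}(L_1)_{s_1,i}(L_2)_{s_2,j}(L_3)_{s_3,k}T_{i,j,k}$ and distributing over $r$ yields exactly this rank-$R$ form. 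Since $\~T$ has at most $R^3\le 64$ entries and a candidate decomposition has at most $3R^2\le 48$ parameters over $\F$, I can brute-force all $|\F|^{3R^2}$ candidate triples $(A', B', C')$ and verify each in $O(R^3)$ time, which is constant work $f(|\F|,R)=O(|\F|^{3R^2}R^3)$ independent of $n$; lifting back costs $O(R^2 n)$.

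The total runtime is $O(n^3+f(|\F|,R))$, comfortably within the claimed bound. The main obstacle I anticipate is in setting up the central lemma --- specifically, verifying $T\in U_1\otimes U_2\otimes U_3$ (which follows inductively from each mode-$\ell$ fiber lying in $U_\ell$) and arranging the contractions so that both directions of the correspondence fall out cleanly from $L_\ell A_0^T=I$. Once the lemma is established, everything else reduces to routine linear algebra and a constant-time brute force on $\~T$.
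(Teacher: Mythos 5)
Your proposal is correct, and it takes a genuinely different route from the paper. Both approaches start by compressing $T$ along a mode using the low-rank structure of its unfoldings, but you compress all three modes to a Tucker-style core $\~{T}$ of shape at most $R\times R\times R$, then brute-force the constant-size core. The paper compresses only mode $1$ (to obtain $\le R$ basis slices), then enumerates the $R\times R$ factor $A'$ and is left with a system of linear equations over $n\times n$ rank-$\le 1$ matrix unknowns $M_r$, which it solves through a case analysis on the reduced coefficient matrix (Lemmas \ref{fullrank}, \ref{lessrank}, \ref{not-fullrank}); that case analysis is exactly what forces the restriction $R\le 4$. Your central lemma (rank-$R$ decompositions of $T$ correspond to rank-$R$ decompositions of $\~{T}$) is correct in both directions: compressing a CP decomposition of $T$ by the left-inverses $L_\ell$ gives one of $\~{T}$, and decompressing a CP decomposition of $\~{T}$ by $A_0^\intercal, B_0^\intercal, C_0^\intercal$ gives one of $T$, using $T\in U_1\otimes U_2\otimes U_3$ and $L_\ell A_0^\intercal = I$. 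The tradeoffs: your method achieves $O(n^3 + |\F|^{3R^2}\poly(R))$, with no $n^2$ factor on the field-dependent term, and appears to work for any fixed $R$, which would resolve the paper's first open question; the paper's method has a smaller field exponent ($|\F|^{R^2+g(R)}$ with $g(4)=12$, so $|\F|^{28}$ versus your $|\F|^{48}$ at $R=4$), at the cost of an $n^2$ factor and the $R\le 4$ restriction. One minor point: your verification of each candidate core decomposition costs $O(R^4)$ (an $R$-term sum per each of $\le R^3$ entries), not $O(R^3)$, but this does not affect the conclusion.
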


However, when allowing certain cells of the target tensor to have arbitrary value (``wildcards"), we have a hardness result over the integers mod 2 via a reduction from Not-All-Equal 3SAT (NAE-3SAT):
\begin{theorem}
\label{NAE}
If some cells of a $p\times q\times s$ tensor $T$ are allowed to have arbitrary value, then determining if $T$ has a rank-2 decomposition over $\ZM{2}$ is NP-hard.
\end{theorem}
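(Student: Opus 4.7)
The plan is to reduce from positive NAE-3SAT, which is NP-hard (being equivalent to hypergraph $2$-colorability). Given a formula $\phi$ with variables $x_1,\ldots,x_n$ and clauses $C_1,\ldots,C_m$, I would construct a tensor $T$ of dimensions $O(n) \times O(n) \times O(n)$ over $\ZM{2}$, with certain cells fixed and the rest declared wildcards, such that $T$ admits a rank-$2$ decomposition iff $\phi$ is NAE-satisfiable.

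The key observation driving the reduction is the following. Write a rank-$2$ decomposition as $T = u_0 \times v_0 \times w_0 + u_1 \times v_1 \times w_1$, and suppose at indices $i, j, k$ we have $(u_0[i], u_1[i]) = (x, \bar x)$, $(v_0[j], v_1[j]) = (y, \bar y)$, and $(w_0[k], w_1[k]) = (z, \bar z)$ for some $x, y, z \in \ZM{2}$. Then $T_{i,j,k} = xyz + \bar x \bar y \bar z$, which equals $1$ exactly when $x = y = z$ and $0$ otherwise; so pinning $T_{i,j,k} = 0$ encodes $\NAE{x,y,z}$ in a single cell. The analogous identities $x + \bar x = 1$ and $xy + \bar x \bar y = 1 \oplus x \oplus y$ over $\ZM{2}$ let us force, by a single cell each, the disequality of a variable-pair and the equality of two variable-pairs.

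Using this, the construction is as follows. For each variable $v$ introduce three indices $i(v), j(v), k(v)$, one per tensor dimension. Also introduce a small anchor block in each dimension. The anchor is designed so that a few fixed cells restricted to the anchor slice form a rank-$2$ sub-block, forcing the corresponding anchor entries to equal $(1, 1)$ in both rank-$1$ layers (ruling out degeneracies like $u_0[i_\star] = 0$). Once the anchors are pinned, cells $T_{i(v), j_\star, k_\star} = 1$ and its two rotations force each variable-pair $(u_0[i(v)], u_1[i(v)])$ to lie in $\{(0,1), (1,0)\}$, encoding a Boolean value; cells $T_{i(v), j(v), k_\star} = 1$ and $T_{i(v), j_\star, k(v)} = 1$ then synchronize the three copies of each variable across the three dimensions. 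Finally, for each clause $(x_a, x_b, x_c)$ I set $T_{i(a), j(b), k(c)} = 0$ to enforce $\NAE{x_a, x_b, x_c}$, leaving all other cells as wildcards.

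Completeness is direct: an NAE-satisfying assignment $\alpha$ yields a rank-$2$ decomposition by setting $u_0[i(v)] = \alpha(v)$, $u_1[i(v)] = \overline{\alpha(v)}$, and analogously for $v_0, v_1, w_0, w_1$. For soundness, extract $\alpha(v) := u_0[i(v)]$ from any rank-$2$ decomposition of $T$; the swap symmetry between the two rank-$1$ layers corresponds precisely to the global negation $\alpha \mapsto \bar \alpha$ of NAE-3SAT and so preserves NAE-satisfiability. The main obstacle is verifying rigidity --- the wildcards give substantial freedom, and one must carefully design the anchor sub-block and the consistency cells to exclude all spurious decompositions, including degenerate ones where one rank-$1$ layer vanishes or the two layers coincide on some index.
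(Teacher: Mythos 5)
Your high-level plan (reduce from NAE-3SAT, encode each Boolean variable as a complementary pair of $\{0,1\}$-entries across the two rank-one layers, use the identity $xyz + \bar x\bar y\bar z = 0 \iff \NAE{x,y,z}$ so that a clause is a single cell set to $0$) matches the paper's proof. The clause gadget is essentially identical. The gap is in the variable gadget, and it is not merely ``rigidity left to verify'' --- the anchor you sketch cannot work in the form stated. You want anchor indices $i_\star, j_\star, k_\star$ with, say, $(v_0[j_\star], v_1[j_\star]) = (1,1)$ and similarly in the other two dimensions, pinned by ``a few fixed cells restricted to the anchor slice form[ing] a rank-$2$ sub-block.'' But if every anchor entry in both layers equals $1$, then over $\ZM{2}$ every cell whose three indices are all anchored evaluates to $1\cdot 1\cdot 1 + 1\cdot 1\cdot 1 = 0$; the anchor-only sub-block is the zero tensor, which has rank $0$, not $2$. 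So the proposed mechanism for bootstrapping rigidity is internally inconsistent, and some genuinely different anchor design (or none at all) is needed.

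The paper avoids the bootstrapping problem entirely. Its variable gadget is the $2\times 2\times 2$ tensor $\mathcal{I}$ with $1$s on the main diagonal, for which a finite brute-force check shows the \emph{only} rank-$2$ decompositions over $\ZM{2}$ are $\M{a\\\neg a}^{\times 3}+\M{\neg a\\a}^{\times 3}$ with $a\in\{0,1\}$ (degenerate and all-zero-vector cases included). Placing $n$ disjoint copies of $\mathcal{I}$ block-diagonally at indices $\{2i, 2i+1\}$ and declaring everything else a wildcard forces, in a single stroke, (i) each per-variable entry pair to be complementary, (ii) no degenerate layers, and (iii) agreement of the three dimensions --- precisely the three jobs your anchor plus two families of consistency cells were supposed to do. The two global layer-swaps then correspond exactly to the global negation symmetry of NAE, as you correctly anticipated. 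The paper also handles negated literals for free by shifting the index (position $2i$ reads $v_i$, position $2i+1$ reads $\neg v_i$), whereas you restrict to positive NAE-3SAT; that restriction is harmless for NP-hardness but is another place where the $\mathcal{I}$-block gadget buys extra convenience.

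If you want to salvage your route, the natural fix is to drop the $(1,1)$ anchor and instead make the anchor itself an $\mathcal{I}$-block (equivalently, a ``constant-true'' variable): then the anchor pair is $(1,0)/(0,1)$ rather than $(1,1)$, and the consistency-cell arithmetic must be redone accordingly. At that point, however, you have essentially rediscovered the paper's gadget, and the per-variable $\mathcal{I}$-block already gives you everything the anchor-plus-consistency-cells machinery was aiming for.
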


We also sketch out polynomial-time algorithms for rank-1 decomposition with wildcards over a 3D tensor and a matrix, over an arbitrary fixed finite field.

\subsection{Notation}
\begin{itemize}
    \item We use NumPy notation for indexing on tensors.
    \begin{itemize}
        \item e.g. for a matrix $M$:
        \begin{itemize}
            \item $M_{i,:}$ denotes the $i$-th row
            \item $M_{:i,:}$ denotes the matrix truncated to its first $i$ rows
            \item $M_{:,i}$ denotes the $i$-th column
        \end{itemize}
    \end{itemize}
    \item All tensors and sequences are 0-indexed.
    \item The reduced row echelon form of a matrix $M$ is denoted as $\rref{M}$.
\end{itemize}

\section{Without wildcards}
\subsection{Preliminary results about matrix factorizations}
Before describing our algorithm for Theorem \ref{main}, we present several lemmas on \textit{matrix} rank decompositions that we will use extensively.

\noindent The first lemma is a direct result of Gaussian elimination:

\begin{lemma}
\label{rref}
Given any matrix $M\in\F^{m\times n}$, $\exists T\in\GL{m}{\F}$ s.t. $TM=\rref{M}$.
\end{lemma}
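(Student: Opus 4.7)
The plan is to realize the Gaussian elimination process as a sequence of left-multiplications by elementary matrices. Each elementary row operation (swapping two rows, scaling a row by a nonzero element of $\F$, or adding a scalar multiple of one row to another) corresponds to left-multiplying $M$ by a specific elementary matrix $E_i \in \F^{m \times m}$, and each such $E_i$ is invertible, hence lies in $\GL{m}{\F}$.

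The key steps I would carry out are as follows. First, I would recall the explicit form of the three types of elementary matrices and verify (by a one-line computation in each case) that multiplying $M$ on the left by $E_i$ performs the intended row operation on $M$, and that $E_i$ is invertible with an explicit inverse of the same type. Second, I would invoke the standard fact that Gaussian elimination terminates in finitely many steps, producing a sequence $E_1, E_2, \dots, E_k$ of elementary matrices with $E_k E_{k-1} \cdots E_1 M = \rref{M}$. Finally, I would set $T := E_k E_{k-1} \cdots E_1$; since $\GL{m}{\F}$ is closed under products, $T \in \GL{m}{\F}$ and $TM = \rref{M}$, as required.

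There is no real obstacle here: this is essentially the textbook derivation of RREF. The only mild subtlety is to make sure the argument is valid over an arbitrary field $\F$ (including finite fields), but this is immediate because each elementary row operation only uses field addition, multiplication, and the existence of multiplicative inverses for nonzero elements, all of which hold in any field. Consequently I would expect the written proof to be only a few sentences.
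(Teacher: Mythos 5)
Your proof is correct and matches the paper's approach exactly: the paper states this lemma without a written proof, noting only that it is ``a direct result of Gaussian elimination,'' which is precisely the elementary-matrix argument you spell out. Nothing to add.
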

It follows that inverting a $n\times n$ matrix can be done in $O(n^3)$ time.

\noindent The next lemma is the uniqueness of rref (up to all-zero rows):
\begin{lemma}
\label{unique-rref}
Two rref matrices with the same rowspan must be equal, up to the presence of all-zero rows.
\end{lemma}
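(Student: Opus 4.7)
The plan is to first remove all zero rows from $M_1$ and $M_2$, which preserves rowspan and the rref property, so both become rref matrices with no zero rows sharing a common rowspan $V$. The goal is then to show these two stripped matrices are literally equal. Both have exactly $r = \dim V$ rows, since the nonzero rows of any rref matrix are linearly independent (an immediate consequence of the staircase pivot structure) and therefore form a basis of $V$.

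The key step is to prove that the sequence of pivot column indices $j_0 < j_1 < \cdots < j_{r-1}$ is determined solely by $V$, not by the specific matrix. For this I would introduce the nested chain of subspaces $W_j = \{v \in V : v_k = 0 \text{ for all } k < j\}$, so that $V = W_0 \supseteq W_1 \supseteq \cdots \supseteq W_n = \{0\}$. Then I would show that for any rref matrix with rowspan $V$, its pivot columns are exactly the indices $j$ at which $\dim W_j > \dim W_{j+1}$. The argument is that the rows of the rref whose pivots lie at column $\ge j$ are a linearly independent set contained in $W_j$, and conversely any $v \in W_j$ expanded against all $r$ rows has zero coefficient on any row whose pivot is $< j$ (since the coefficient on row $i'$ equals $v_{j_{i'}}$). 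Hence $\dim W_j$ equals the number of pivots $\ge j$, and the pivot positions are an invariant of $V$.

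Once the pivot columns of $M_1$ and $M_2$ are known to agree, I would finish with a row-by-row uniqueness argument. By the rref property, the $i$-th row of each matrix is zero at columns $0,\ldots,j_i-1$, equals $1$ at column $j_i$, and is zero at every other pivot column $j_{i'}$, $i'\ne i$. Any $v \in V$ with this signature can be expanded as $v = \sum_{i'} c_{i'}\, r^{(1)}_{i'}$ using the rows of $M_1$; reading off entries at the pivot columns gives $c_{i'} = v_{j_{i'}}$, forcing $c_i = 1$ and all other coefficients zero, so $v = r^{(1)}_i$. Applying this uniqueness to $v = r^{(2)}_i$ yields $r^{(2)}_i = r^{(1)}_i$ for every $i$.

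The main obstacle is really the middle step: cleanly extracting the pivot column positions as an invariant of the rowspan rather than as an artifact of a particular rref matrix. The dimension-chain argument through the $W_j$ makes this precise, and after that the rest of the proof is just a mechanical unwinding of the rref definition.
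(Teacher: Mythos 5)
Your proof is correct and follows the same two-stage strategy as the paper's: first establish that the pivot column positions are an invariant of the rowspan, then use the rref structure (reading coefficients off pivot columns) to force row-by-row equality. The only difference is cosmetic: where the paper argues directly that every nonzero vector in the rowspan has its leading nonzero entry at a pivot column, you formalize the same invariant via the filtration $W_j$ and its dimension drops, which is a slightly cleaner way to make the pivot positions canonical but amounts to the same idea.
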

\begin{proof}
Consider an rref matrix $W=\MA{c}{w_0\\\hline \vdots \\\hline w_{k-1} \\\hline O}$, where each $w_i\ne\vec{0}$ has its leading 1 at column $\ell_i$, with $\ell_0<\dots<\ell_{k-1}$. Consider a linear combination of the rows, $v:=\sum_i \alpha_i w_i$, where the $\alpha_i$ are not all zero: then $v_{\ell_i}=\alpha_i$, since by definition of rref, every column of $W$ that has a leading 1 has all of its other elements set to 0.
As a consequence, if we are instead given some $v$ that is known to be a linear combination of $w_i$, then the list of coefficients $\alpha_i$ of that linear combination are unique and can immediately be deduced by reading each $v_{\ell_i}$ element.

Additionally, the leading nonzero term of $v$ must be at some $\ell_i$-th column, as we can use the following casework: if $\alpha_0\ne 0$, the leading nonzero term of $v$ is $\ell_0$, since each $w_i$ is all zeros to the left of its leading 1: else if $\alpha_1\ne 0$, the leading nonzero term of $v$ is $\ell_1$: etc.

Now suppose there is another rref matrix $W'$ s.t. $\rowspan{W}=\rowspan{W'}$: then $W'$ cannot have a leading 1 at a column that is not one of the $\ell_i$-th, otherwise $\rowspan{W'}$ would contain a row-vector that is not in $\rowspan{W}$. Conversely, $W'$ must contain a leading 1 at every $\ell_i$-th column, since $w_i\in\rowspan{W}$ for each $i$.
Thus, for each $0\le i<k$, $W'_{i,:}$ has a leading 1 at column $\ell_i$.

Finally, since $w_i\in\rowspan{W'}$ and, by definition of rref, $w_i$ contains a 1 at column $\ell_i$ and a 0 at all other leading columns $\ell_j$ for $j\ne i$, the only way for $w_i$ to be a linear combination of the rows of $W'$ is if $w_i=W'_{i,:}$. Thus, $W_{:k,:}=W'_{:k,:}$.
\end{proof}

\noindent The previous two lemmas yield an important result about \textit{full-rank factorizations}, i.e. rank-$r$ factorizations of a rank-$r$ matrix:

\begin{lemma}
\label{fullrank}
Consider an $m\times n$ matrix $M$ over an arbitrary field with rank $r:=\rank{M}>0$.
Then there exists $C_0\in\F^{m\times r},\ F_0\in\F^{r\times n}$ s.t. $C_0F_0=M$.
Furthermore, for any $C\in\F^{m\times r},\ F\in\F^{r\times n}$ s.t. $CF=M$, $\exists X\in\GL{r}{\F}$ s.t. $C=C_0 X$ and $F=X^{-1}F_0$.
\end{lemma}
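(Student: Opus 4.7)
The plan is to handle existence by extracting $C_0$ and $F_0$ directly from the rref of $M$ via Lemma \ref{rref}, and then to handle the uniqueness clause by recognizing that the rows of any full-rank factor $F$ must form a basis of $\rowspan{M}$, so the change-of-basis matrix $X$ falls out of Lemma \ref{unique-rref}.

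For existence, I would apply Lemma \ref{rref} to obtain $T\in\GL{m}{\F}$ with $TM=\rref{M}$. Since $\rank{M}=r$, the bottom $m-r$ rows of $\rref{M}$ vanish, so writing $\rref{M}=\M{F_0\\O}$ with $F_0\in\F^{r\times n}$ and taking $C_0\in\F^{m\times r}$ to be the first $r$ columns of $T^{-1}$ immediately yields $C_0F_0=M$.

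For uniqueness, suppose $CF=M$ with $C\in\F^{m\times r}$, $F\in\F^{r\times n}$. Because rank is submultiplicative, $r=\rank{CF}\le\min(\rank{C},\rank{F})\le r$, so both $C$ and $F$ have full rank $r$; in particular, $F$ has no all-zero rows. The inclusion $\rowspan{M}\subseteq\rowspan{F}$ is automatic from $M=CF$, and dimension-counting forces equality, so $\rowspan{F}=\rowspan{M}=\rowspan{F_0}$. Applying Lemma \ref{rref} to both $F$ and $F_0$ gives $T_0,T_1\in\GL{r}{\F}$ with $T_0F_0=\rref{F_0}$ and $T_1F=\rref{F}$, and Lemma \ref{unique-rref} equates the two rrefs (both having $r$ nonzero rows and no padding), so $F=X^{-1}F_0$ for $X:=T_0^{-1}T_1\in\GL{r}{\F}$.

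The remaining step I expect to require the most care is extracting $C=C_0X$: substituting $F=X^{-1}F_0$ into $CF=M=C_0F_0$ gives $(CX^{-1}-C_0)F_0=0$, and I need to cancel $F_0$ on the right. This is where I would use that $F_0$ has full row rank and therefore admits a right inverse, obtained explicitly by picking $r$ linearly independent columns of $F_0$ and inverting the resulting $r\times r$ submatrix. Right-multiplying by such an inverse forces $CX^{-1}=C_0$, and $C=C_0X$ follows. The only nontrivial ingredient in the whole argument is thus the rref-uniqueness invocation; the rest is bookkeeping.
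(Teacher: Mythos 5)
Your proposal is correct and follows essentially the same approach as the paper: existence via Lemma \ref{rref}, then using the rowspan equality $\rowspan{F}=\rowspan{M}$ together with Lemma \ref{unique-rref} to pin down $F$ as $X^{-1}F_0$, and finally cancelling $F_0$ on the right to get $C=C_0X$. The only cosmetic differences are that the paper notes $F_0=\rref{M}_{:r,:}$ is already in rref (so only one invertible transformation is needed rather than your $T_0,T_1$), and it phrases the cancellation step as ``the rows of $F_0$ are linearly independent, so $CX^{-1}$ is uniquely determined'' instead of explicitly constructing a right inverse, but these are the same argument.
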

\begin{proof}
By Lemma \ref{rref}, $\exists T\in\GL{m}{\F}$ s.t. $TM=\rref{M}$, so $T^{-1}\rref{M}=M$. Since all except the top $r$ rows of $\rref{M}$ are all-zeros, $(T^{-1})_{:,:r}\rref{M}_{:r,:}=M$, so we can set $(C_0,F_0)=((T^{-1})_{:,:r},\rref{M}_{:r,:})$.

Now suppose there are $C\in\F^{m\times r},\ F\in\F^{r\times n}$ s.t. $CF=M$.
Since $\rowspan{M}=\rowspan{CF}\subseteq\rowspan{F}$ and $\rank{F}\le r=\rank{M}$, we have $\rowspan{M}=\rowspan{F}$ and $\rank{F}=r$.
Since row reduction preserves rowspan, $\rowspan{\rref{M}}=\rowspan{\rref{F}}$, so by Lemma \ref{unique-rref}, $\rref{F}=\rref{M}_{:r,:}=F_0$.

Using Lemma \ref{rref}, $\exists X\in\GL{r}{\F}$ s.t. $XF=F_0$. Solving for $F$ and substituting it in the original equation $CF=M$ yields $(CX^{-1})F_0=M$. Since $\rank{F_0}=r$, its rows are linearly independent, so there is at most one solution for $CX^{-1}$. We know $C_0$ is one possible solution, so $CX^{-1}=C_0$. Thus, $C=C_0X$ and $F=X^{-1}F_0$.

Since invertible matrices form a group, it follows that all pairs of full-rank factorizations of $M$ are reachable from each other via the transformation $(C,F)\mapsto(CX,X^{-1}F)$.
\end{proof}

The upshot of this result is that there are exactly $|\GL{r}{\F}|\in O(|\F|^{r^2})$ many rank-$r$ decompositions of a rank-$r$ matrix, \textit{no matter how large $m$ and $n$ are}. % Furthermore, by Lemma \ref{rref} and matrix multiplication, all such decompositions can be constructed in polynomial time w.r.t. $m,n$, given fixed $\F$ and fixed $r$.
This is the core reason why our main algorithm will run in polynomial time w.r.t. the size of the tensor.
In contrast, the number of rank-$(r+1)$ decompositions of a rank-$r$ matrix $M$ might be exponential w.r.t. the dimensions of $M$, e.g., when $M$ is all-zeros.

\noindent Finally, we deduce a property of matrix factorizations that are not necessarily full-rank:

\begin{lemma}
\label{lessrank}
For $U\in\F^{m\times r}$ and $V\in\F^{r\times n}$, let $r':=\rank{UV}$. Then $\rank{U}\le r'$ or $\rank{V}<r$.
\end{lemma}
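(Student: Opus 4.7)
The plan is to proceed by contrapositive: assume $\rank{V}=r$ (so we are not in the case $\rank{V}<r$) and deduce $\rank{U}\le r'$. The natural strategy is to exhibit a right inverse $W\in\F^{n\times r}$ of $V$, meaning $VW=I_r$; then from $U=U(VW)=(UV)W$ one immediately reads off $\rank{U}\le\rank{UV}=r'$, which is exactly the desired bound.

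To build such a $W$, I would apply Lemma \ref{rref} to obtain $T\in\GL{r}{\F}$ with $TV=\rref{V}$. Since $\rank{V}=r$, the matrix $\rref{V}$ has $r$ nonzero rows with distinct leading columns $\ell_0,\dots,\ell_{r-1}$; by the defining property of rref, column $\ell_i$ of $\rref{V}$ is exactly the standard basis vector $e_i\in\F^r$. So the $n\times r$ matrix $W'$ with a 1 at each position $(\ell_i,i)$ and 0s elsewhere satisfies $\rref{V}\cdot W'=I_r$. Substituting $\rref{V}=TV$ gives $T(VW')=I_r$, hence $VW'=T^{-1}$, and right-multiplying by $T$ yields $V(W'T)=I_r$. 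Thus $W:=W'T$ is the required right inverse, and the argument above closes out.

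I do not foresee any serious obstacle here: the content is essentially the standard observation that a matrix of full row rank admits a right inverse, packaged so that it flows from Lemma \ref{rref} rather than from a black-box fact. The only subtle point to get right is tracking the order of multiplications when converting the easy right inverse of $\rref{V}$ into a right inverse of $V$ itself; once that is done, the final inequality $\rank{U}=\rank{(UV)W}\le\rank{UV}$ is immediate from submultiplicativity of rank.
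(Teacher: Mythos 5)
Your proof is correct, but it takes a different route from the paper. The paper applies Lemma \ref{rref} to $UV$ itself: it picks $T\in\GL{m}{\F}$ with $TUV=\rref{UV}$, observes that all rows of $TUV$ below the $r'$-th are zero, and then uses the linear independence of the rows of $V$ (since $\rank{V}=r$) to conclude that the corresponding rows of $TU$ must also vanish, giving $\rank{TU}\le r'$ and hence $\rank{U}\le r'$. You instead apply Lemma \ref{rref} to $V$, build an explicit right inverse $W$ of $V$ from the leading-column structure of $\rref{V}$, and then read off $\rank{U}=\rank{(UV)W}\le\rank{UV}$ from submultiplicativity. Both arguments hinge on $\rank{V}=r$ and both reduce to Lemma \ref{rref}, so the conceptual content is the same; the paper's version is shorter because it never needs to construct the inverse, while yours isolates the reusable fact that a full-row-rank matrix admits a right inverse, which is arguably the cleaner takeaway. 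One minor bookkeeping note: the lemma's disjunction is equivalent to the implication ``$\rank{V}=r\Rightarrow\rank{U}\le r'$'' only because $\rank{V}\le r$ automatically (as $V$ has $r$ rows); the paper states this explicitly, and it is worth keeping that sentence so the contrapositive reduction is airtight.
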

\begin{proof}
By Lemma \ref{rref}, $\exists$ $T\in\GL{m}{\F}$ s.t. $TUV=\rref{UV}$. In particular, all but the topmost $r'$ rows of $TUV$ are all-zeros.

If $\rank{V}=r$, then the rows of $V$ are linearly independent.
Because $(TUV)_{i,:}=(TU)_{i,:}V=\vec{0} \ \forall i\ge r'$, this forces $(TU)_{i,:}=\vec{0} \ \forall i\ge r'$, so all except the top $r'$ rows of $TU$ are all-zeros. We then have $\rank{TU}\le r'$ and thus $\rank{U}\le r'$ due to $T$ being invertible.

We have thus proven that $(\rank{V}=r)\Rightarrow (\rank{U}\le r')$, which is logically equivalent to the final statement of the lemma.
\end{proof}

Using the well-known properties $\rank{A}=\rank{A^\intercal}$ and $\rank{AB}\le\min{\rank{A},\rank{B}}$ for arbitrary matrices $A$ and $B$, we can combine Lemma \ref{lessrank} with its transpose to get a stronger condition:

\begin{corollary}
\label{lessrank-both}
For $U\in\F^{m\times r}$ and $V\in\F^{r\times n}$, let $r':=\rank{UV}$.
Then $\paren{\rank{U}\le r' \ \lor \ \rank{V}<r}
\ \land \ \paren{\rank{U}<r \ \lor \ \rank{V}\le r'}
\ \land \ \rank{U}\ge r' \ \land \ \rank{V}\ge r'$.
\end{corollary}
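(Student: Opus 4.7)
The corollary is essentially a packaging of four facts, so the plan is to handle each conjunct separately and observe that they all follow immediately from Lemma \ref{lessrank} and the two standard rank facts $\rank{A}=\rank{A^\intercal}$ and $\rank{AB}\le\min{\rank{A},\rank{B}}$ cited right before the corollary.

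First I would dispatch the easy pair: the last two conjuncts $\rank{U}\ge r'$ and $\rank{V}\ge r'$ are just the inequality $\rank{UV}\le\min{\rank{U},\rank{V}}$ read off directly, using the definition $r'=\rank{UV}$. No work is required beyond citing the standing fact.

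Next I would handle the first disjunction $\rank{U}\le r' \lor \rank{V}<r$: this is literally the statement of Lemma \ref{lessrank}, so it just needs to be invoked. For the second disjunction $\rank{U}<r \lor \rank{V}\le r'$, the plan is to apply Lemma \ref{lessrank} to the transposed factorization $(UV)^\intercal = V^\intercal U^\intercal$, where $V^\intercal\in\F^{n\times r}$ and $U^\intercal\in\F^{r\times m}$ play the roles of $U$ and $V$ in the lemma. Since $\rank{(UV)^\intercal}=\rank{UV}=r'$, the lemma gives $\rank{V^\intercal}\le r' \lor \rank{U^\intercal}<r$, which by $\rank{A}=\rank{A^\intercal}$ is exactly $\rank{V}\le r' \lor \rank{U}<r$.

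There is no real obstacle here; the only thing to be careful about is matching the dimensions correctly when transposing (the inner dimension $r$ is preserved, so Lemma \ref{lessrank} applies verbatim) and then combining the four facts with an $\land$ to obtain the stated corollary.
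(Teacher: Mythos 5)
Your proof is correct and matches the paper's approach exactly: the paper gives no separate proof, just the one-line remark before the corollary that it follows by combining Lemma \ref{lessrank} with its transpose and the standard facts $\rank{A}=\rank{A^\intercal}$ and $\rank{AB}\le\min{\rank{A},\rank{B}}$. You have simply spelled out the transposition bookkeeping explicitly, which is a harmless elaboration of the intended argument.
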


\subsection{Algorithm}
We now present our main algorithm to solve $T_{i,j,k}=\sum_{r=0}^{R-1} A_{r,i}B_{r,j}C_{r,k} \ \forall i,j,k$, for an arbitrary $n\times n\times n$ tensor $T$, where $R\le 4$ is fixed.

Our main insight is to fix $A$ and then solve for $B,C$ in a clever manner. To ensure we do not enumerate too many $A$, we extract a basis on the slices $T_{i,:,:}$, denoted as $\{T'_{j,:,:}\}_j$, which can be obtained by flattening the slices $T_{i,:,:}$ and row-reducing the matrix $\~{T}:=\M{\vdots \\\hline \vec{T_{i,:,:}} \\\hline \vdots}_i$. Let $J:=\rank{\~{T}}$: then $T'$ contains $J$ slices, and by Lemma \ref{rref}, $\exists M\in\GL{n}{\F}$ s.t. $\sum_{0\le i<n} M_{j,i}T_{i,:,:} = \casew{T'_{j,:,:} & j<J \\ O}$, implying $T_{i,:,:}=\sum_{0\le j<J} (M^{-1})_{i,j}T'_{j,:,:}$.

These relations allow us to convert an arbitrary rank-$R$ decomposition of $T$ into a rank-$R$ decomposition of $T'$ and vice versa:

\begin{align*}
    T'=\sum_{r=0}^{R-1} A'_r\times B_{r,:}\times C_{r,:} & \Rightarrow T=\sum_{r=0}^{R-1} \bracket{\sum_{0\le j<J} (M^{-1})_{i,j}A'_{r,j}}_{0\le i<n} \times B_{r,:}\times C_{r,:} \\
    T=\sum_{r=0}^{R-1} A_r\times B_{r,:}\times C_{r,:} & \Rightarrow T'=\sum_{r=0}^{R-1} \bracket{\sum_{0\le i<n} M_{j,i}A_{r,i}}_{0\le j<J}\times B_{r,:}\times C_{r,:}
\end{align*}

Thus, our original problem is equivalent to solving $T'=\sum_r A'_{r,:}\times B_{r,:}\times C_{r,:}$.

The upshot of constructing $T'$ is that if the original problem $T=\sum_r A_r\times B_{r,:}\times C_{r,:}$ has a solution, then every $T_i$ is a linear combination of $\brace{B_{r,:}\times C_{r,:}}_{0\le r<R}$, so all $T_{i,:,:}$ collectively are in a linear subspace (of $\F^{n\times n}$) of rank at most $R$. This forces $T'$ to have at most $R$ many slices; otherwise, the original problem has no solution and we can immediately terminate. Thus, there are at most $|\F|^{R^2}$ possible $A'$ to consider, which is constant w.r.t. $n$.

If $J=0$, i.e. $T'$ is empty, then $T$ must be all-zeros, so it trivially has a rank-$R$ decomposition (e.g., with all elements in $A,B,C$ set to 0).

Otherwise, we proceed by treating each $B_{r,:}\times C_{r,:}$ as an unknown matrix $M_r$ with the condition that $\rank{M_r}\le 1$  \footnote{
    Reconstructing some $B_{r,:}$ and $C_{r,:}$ from a given $M_r$ s.t. $B_{r,:}\times C_{r,:}=M_r$ can be done via rank factorization.
}.

Then the decomposition problem on $T'$ becomes a linear system of equations over \textit{matrix} variables $M_r$ (instead of numbers): $\M{\vdots \\ T'_{j,:,:} \\ \vdots}_{0\le j<J}=A'^\intercal \M{\vdots \\ M_r \\ \vdots}_r$  \footnote{Note that we do not concatenate the matrices $T'_{j,:,:}$ or $M_r$ together; we organize them into column-vectors whose elements are themselves matrices.}.

As in any linear system, we can do row reduction: by Lemma \ref{rref}, $\exists F\in\GL{J}{\F}$ s.t. $FA'^\intercal=\rref{A'^\intercal}$.
Left-multiplying the linear system by $F$ yields $\M{\vdots\\D_j\\\vdots}_j=\rref{A'^\intercal}\M{\vdots \\ M_r \\ \vdots}_r$, where $D_j:=\sum_k F_{j,k}T'_{k,:,:} \ \in \F^{n\times n}$.
Because $F$ is invertible, we can left-multiply the new linear system by $F^{-1}$ to get the old system again, so any solution of the new system is also a solution of the old system and vice versa; thus, solving the new system over $M_r$ is equivalent to solving the old system.

Now we split casework over the matrix $\rref{A'^\intercal}$, which we abbreviate as $E$ from this point on. We pay special attention to where the nonzero elements of $E$ are located within each row, specifically by analyzing the sets $S_i:=\brace{r:E_{i,r}\ne 0}$, i.e. the sets of column indices of nonzero elements within each row. Using these sets, our system of equations is equivalent to $D_i=\sum_{r\in S_i} E_{i,r} M_r \ \forall i$, under the constraints $\rank{M_r}\le 1 \ \forall r$.

\subsubsection{Preprocessing}
First, we can remove any rows in $E$ that are all-zeros, since if there is such a row $j$, then having $D_j\ne O$ will make the system have no solution, and having $D_j=O$ will make this row redundant.

Next, order for a solution to exist, we must have $\rank{D_i}\le|S_i| \ \forall i$, since each $D_i$ is the sum of $|S_i|$ many matrices that each have rank $\le 1$. We call this the \textit{rank precondition}.

Finally, we notice that if there is some set of indices $I$ that does not share any $M_r$ variables with any outside rows, i.e. $\forall i\in I \ \forall j\not\in I : S_i\cap S_j = \emptyset$, then solving rows $i\in I$ of the system over variables $\brace{M_r}_{r\in \cup_{i\in I} S_i}$ is independent of solving rows $i\not\in I$ of the system over variables $\brace{M_r}_{r\in \cup_{i\not\in I} S_i}$.
We say $E$ is \textit{separable} if this scenario is possible  \footnote{
    An example of $E$ being separable is $E=\M{1&&&x\\&1\\&&1&y}$, since row 1 can be separated from rows $\brace{0,2}$.
}.

Thus, we can assume WLOG that $E$ is a rref matrix that contains $R$ columns and is not separable  \footnote{
    Algorithmically, we can separate $E$ into non-separable pieces by finding the connected components of the graph $G=\paren{ \brace{0,\dots J-1},\ \brace{(i,j):S_i\cap S_j\ne\emptyset} }$, which can be done in $O(R)$ time using breadth-first search.
}.
This forces $E$ to be restricted to the following forms
\footnote{
    One could further restrict $E$ by permuting the variables $M_r$ and rescaling them by nonzero numbers, which corresponds to right-multiplying $E$ by $PD$, where $P$ is a permutation matrix and $D$ is a diagonal matrix s.t. all elements on the diagonal are nonzero.
}, where $w,x,y,z$ below are arbitrary elements of $\F$. Blank elements denote zeros.
\begin{itemize}
    \item $R=1$:
    $\M{1}$

    \item $R=2$:
    $\M{1&x}$

    \item $R=3$:
    $\M{1&x&y},\ \M{1&&x\\&1&y}$

    \item $R=4$:
    $\M{1&x&y&z},\ \M{1&&w&x\\&1&y&z},\ \M{1&x&&y\\&&1&z},\ \M{&1&&x\\&&1&y},\ \M{1&&&x\\&1&&y\\&&1&z}$
\end{itemize}

\noindent We group these forms into three types:

\subsubsection{Single row}
In this case, we only need to solve for $D_0$, so there exists a solution iff $\rank{D_0}\le|S_0|$ (i.e. the rank precondition is satisfied), and such a solution can be constructed via rank factorization.

\subsubsection{Rows sharing one common column}
In this case, each pair of $S_i$ intersects at exactly one index, and that index is the same across all pairs, i.e. there is some $r^*$ s.t. for all pairs of row indices $i,i'$ with $i\ne i'$, $S_i\cap S_{i'} = \brace{r^*}$. This case covers $E=\M{1&&x\\&1&y},\ \M{1&x&&y\\&&1&z},\ \M{&1&&x\\&&1&y},\ \M{1&&&x\\&1&&y\\&&1&z}$, where all elements in the rightmost column are nonzero.

Here, we utilize full-rank matrix factorizations. If $\exists i \textrm{ s.t. } \rank{D_i}=|S_i|$, we can enumerate all rank-$|S_i|$ factorizations of $D_i$, then for each factorization, check if $\rank{D_{i'}-E_{i',r^*}M_{r^*}}\le |S_{i'}-1| \ \forall i'\ne i$; if this is satisfied, a solution exists, which can again be constructed with repeated matrix rank factorization.

If there is no such $i$, then $\rank{D_i}\le |S_i|-1 \ \forall i$, so we can force $M_{r^*}=O$ and solve each $D_i$ independently.

% Analyzing runtime, the highest possible rank of a matrix that we enumerate rank factorizations of is $\max_i |S_i|$, so by Lemma \ref{fullrank} we enumerate $O(|\F|^{(\max_i |S_i|)^2}) \in O(|F|^{R^2})$ many rank factorizations, which is constant w.r.t. $n$.

% Because $E$ has $R$ columns, each $|S_i|$ is $\le R$, so the highest rank of a matrix that we have to enumerate rank factorizations of is $\le R$. Thus, we enumerate at most $O(|\F|^{R^2})$ many rank factorizations, which is constant w.r.t. $n$.

\subsubsection{Rows sharing two common columns}
This case covers the last remaining form, $E=\M{1&&w&x\\&1&y&z}$, and is the most difficult. Written in full, our system of equations is $\brace{\begin{array}{cccc}
    D_0 = & M_0 && +wM_2+xM_3 \\
    D_1 = && M_1 & +yM_2+zM_3
\end{array}}$, under the constraints $\rank{M_r}\le 1 \ \forall r$.

\subsubsection*{Fixing coefficients WLOG}

WLOG assume $w,x,y,z\ne 0$, otherwise we revert to one of the previous cases (after separating $E$ into independent parts).

We can substitute $M_2':=wM_2,\ M_3':=xM_3,\ y':=\frac{y}{w},\ z':=\frac{z}{x}$ to get $\brace{\begin{array}{cccc}
    D_0 = & M_0 && +M_2'+M_3' \\
    D_1 = && M_1 & +y'M_2'+z'M_3'
\end{array}}$,
then substitute $D_1':=\frac{1}{y'}D_1,\ M_1':=\frac{1}{y
}M_1,\ z'':=\frac{z'}{y'}$ to get $\brace{\begin{array}{cccc}
    D_0 = & M_0 && +M_2'+M_3' \\
    D_1' = && M_1' & +M_2'+z''M_3'
\end{array}}$.
Since $\rank{M}=\rank{\sigma M}$ for any $\sigma\ne 0$, solving our new system of equations with the constraint that each of $M_0,M_1',M_2',M_3'$ has rank $\le 1$ is equivalent to solving our original system with the constraint that each $M_r$ has rank $\le 1$.

Thus, WLOG we can assume $w,x,y=1$ in our original system.

\subsubsection*{If $\boldsymbol{z=1}$:}
By subtracting the second equation from the first, our system becomes $\brace{\begin{array}{cccc}
    D_*= & M_0 & -M_1 & \\
    D_1= && M_1  & +M_2+M_3
\end{array}}$, where $D_*:=D_0-D_1$, so it can be solved with the rows-sharing-one-common-column case.

\subsubsection*{Otherwise, $\boldsymbol{z\ne 0,1}$:}

\noindent We can solve several easy cases on $\rank{D_0}$ and $\rank{D_1}$.
WLOG assume the rank precondition: $\rank{D_0}\le 3 \ \land \ \rank{D_1}\le 3$.

\begin{itemize}
    \item If $\rank{D_0}=3$:

    Enumerate each of the $O(|\F|^9)$ possible full-rank factorizations of $D_0$, which fixes $(M_0,M_2,M_3)$;
    
    then check if $\rank{D_1-M_2-zM_3}\le 1$.

    \item Else if $\rank{D_1}=3$:

    Likewise, enumerate each full-rank factorization of $D_1$, which fixes $(M_1,M_2,M_3)$;
    
    then check if $\rank{D_0-M_2-zM_3}\le 1$.

    \item Else: we have $\rank{D_0}\le 2$ and $\rank{D_1}\le 2$:
    
    \begin{itemize}
        \item If $\rank{D_0}\le 1$:

        Set $M_0=D_0$;

        construct an arbitrary rank-2 factorization $D_1=x_0y_0+x_1y_1$, where $x_i$ are column-vectors and $y_i$ are row-vectors;
        then set $M_1=x_0y_0$.

        We now must solve $M_2+M_3=O,\ M_2+zM_3=x_1y_1$.
        This yields $M_2=-M_3$, and since $z\ne 1$, $M_3=\frac{1}{z-1}x_1y_1$, so we always have a solution:

        \[\paren{M_0,M_1,M_2,M_3}=\paren{
            D_0,
            x_0y_0,
            -\frac{1}{z-1}x_1y_1,
            \frac{1}{z-1}x_1y_1
        }.\]

        \item Else if $\rank{D_1}\le 1$:

        Likewise, set $M_1=D_1$, factorize $D_0=x_0y_0+x_1y_1$,  set $M_0=x_0y_0$. Then we must solve $M_2+M_3=x_1y_1,\ M_2+zM_3=O$, which can be done by setting $M_3=-\frac{1}{z-1}x_1y_1,\ M_2=-zM_3$.
    \end{itemize}
\end{itemize}

This leaves $\rank{D_0}=2 \ \land \ \rank{D_1}=2$ as the only remaining case. We can substitute $M_i=u_iv_i$, with $u_i\in\F^{n\times 1},\ v_i\in\F^{1\times n}$, since $\rank{M_i}\le 1$.
Then solving the system is equivalent to solving $D_0=\underbrace{\MA{c|c|c}{u_0&u_2&u_3}}_{=:U_0}\underbrace{\MA{c}{v_0\\\hline v_2\\\hline v_3}}_{=:V_0}$ and $D_1=\underbrace{\MA{c|c|c}{u_1&u_2&u_3}}_{=:U_1}\underbrace{\MA{c}{v_1\\\hline v_2\\\hline zv_3}}_{=:V_1}$ simultaneously.
Invoking Corollary \ref{lessrank-both}, we have:
\begin{itemize}
    \item $\rank{D_0}=2 \Rightarrow (\rank{U_0},\rank{V_0})\in\brace{(3,2),(2,3),(2,2)}$
    \item $\rank{D_1}=2 \Rightarrow (\rank{U_1},\rank{V_1})\in\brace{(3,2),(2,3),(2,2)}$
\end{itemize}

Collectively, we have $\paren{\rank{U_0}=2 \ \lor \ \rank{V_0}=2} \ \land \ \paren{\rank{U_1}=2 \ \lor \ \rank{V_1}=2}$.

Although the factorizations $D_i=U_iV_i$ are not full-rank, at least one of $U_i$ and $V_i$ must have the same rank as $D_i$, which motivates the following lemma:

\begin{lemma}
\label{not-fullrank}
For $U\in\F^{m\times r},\ V\in\F^{r\times n}$, define $D:=UV$ and $r':=\rank{D}$.
Let $D=U_*V_*$ be a full-rank factorization of $D$, with $U_*\in\F^{m\times r'},\ V_*\in\F^{r'\times n}$.
Suppose furthermore that $\rank{U}=r'$.
Then $U=U_*G$ for some $G\in\F^{r\times r'}$.
\end{lemma}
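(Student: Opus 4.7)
The plan is to reduce the claim to a statement about column spans: if every column of $U$ lies in the column span of $U_*$, then we can write $U=U_*G$ by taking the columns of $G$ to record these linear combinations. (Incidentally, for this to type-check with $U\in\F^{m\times r}$ and $U_*\in\F^{m\times r'}$, we need $G\in\F^{r'\times r}$; the stated dimension $\F^{r\times r'}$ looks like a transposition typo.)

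The key steps, in order, are as follows. First I would verify that $\rank{U_*}=r'$: since $D=U_*V_*$ is a full-rank factorization, we have $r'=\rank{D}\le\rank{U_*}\le r'$ (the last inequality because $U_*$ only has $r'$ columns), so $U_*$ has full column rank. This already forces the column span of $U_*$ to be an $r'$-dimensional subspace of $\F^m$, and since every column of $D=U_*V_*$ is a linear combination of columns of $U_*$, the column span of $D$ is contained in the column span of $U_*$; a dimension count then gives equality. Second, the same reasoning applied to $D=UV$ shows that the column span of $D$ is contained in the column span of $U$, and since $\rank{U}=r'=\rank{D}$, another dimension count gives equality of those two spans as well. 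Chaining these equalities, the column spans of $U$ and $U_*$ coincide, so each column of $U$ is a linear combination of columns of $U_*$, which is exactly $U=U_*G$ for some $G$ of the appropriate shape.

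There is no serious obstacle here; the only thing to be careful about is that one really needs both rank hypotheses ($\rank{U}=r'$ and the full-rankness of the factorization $U_*V_*$) to force the two dimension counts to be tight. An alternative route, if desired, would be to first take a full-rank factorization $U=C_0F_0$ of $U$ itself via Lemma \ref{fullrank}, observe that $(C_0,F_0V)$ is then a full-rank factorization of $D$ (using $\rank{D}=r'\le\rank{F_0V}\le r'$), and invoke the uniqueness clause of Lemma \ref{fullrank} to get $C_0=U_*X$ for some $X\in\GL{r'}{\F}$; setting $G:=XF_0$ then gives $U=U_*G$. This variant has the slight advantage of reusing earlier machinery directly, but the column-span argument is shorter and more transparent, so that is the one I would write up.
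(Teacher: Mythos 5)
Your primary argument (via column spans) is correct and takes a genuinely different route from the paper's. The paper invokes Lemma \ref{fullrank} twice: it first takes a full-rank factorization $U=U'F$ of $U$, observes that $D=U'(FV)$ is then a full-rank factorization of $D$, and uses the uniqueness clause of Lemma \ref{fullrank} to get $U'=U_*X$ for some $X\in\GL{r'}{\F}$, whence $U=U_*(XF)$ --- this is precisely the ``alternative route'' you sketch at the end. Your main argument instead works directly with column spans: since $D=UV$ and $D=U_*V_*$, the column span of $D$ is contained in the column spans of both $U$ and $U_*$, and the rank hypotheses ($\rank{U}=r'$ by assumption, $\rank{U_*}=r'$ because $U_*$ has only $r'$ columns and $\rank{U_*}\ge\rank{D}=r'$) force both inclusions to be equalities; hence the column spans of $U$ and $U_*$ coincide and each column of $U$ is a linear combination of columns of $U_*$. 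This is more elementary and bypasses the uniqueness machinery of Lemma \ref{fullrank}, at the small cost of not exhibiting $G$ quite as explicitly as the paper's $G=XF$. You are also right that the stated dimension of $G$ is a typo: for $U=U_*G$ with $U\in\F^{m\times r}$ and $U_*\in\F^{m\times r'}$ one needs $G\in\F^{r'\times r}$, which agrees with the paper's own construction ($X\in\GL{r'}{\F}$, $F\in\F^{r'\times r}$, so $XF\in\F^{r'\times r}$).
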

\begin{proof}
By Lemma \ref{fullrank}, there exists a full-rank factorization of $U=U'F$, where $U'\in\F^{m\times r'},\ F\in\F^{r'\times r}$.
Then $D=U'(FV)$ is a full-rank factorization of $D$, so by Lemma \ref{fullrank} again, $\exists X\in\GL{r'}{\F}$ s.t. $(U',FV)=(U_*X,X^{-1}V_*)$, which yields $U=U_*(XF)$. \footnote{
    Notice that $V$ is much less restricted, as it only has to satisfy is $(XF)V=V_*$, which could have exponentially many solutions w.r.t. $m,n$.
}
\end{proof}

Thus, we have if $UV$ is a rank-$r$ factorization of a rank-$r'$ matrix, and that $\rank{U}=r$, then there are only $O(|\F|^{rr'})$ possibilities for $U$, and each can be enumerated in polynomial time via matrix multiplication. By matrix transpose, a similar result applies to $V$ if $\rank{V}=r$. Once again, we avoid exponential dependence on the dimensions of the matrices.

We can now solve our last case by splitting it into the following four subcases:
\begin{itemize}
    \item If $\rank{U_0}=2 \ \land \ \rank{U_1}=2$:

    Enumerate all possible pairs $(U_0,U_1)$, which fixes the column-vectors $u_i$;
    
    for each pair, if it does not cause conflicts on what each $u_i$ gets fixed to (i.e. if $(U_1)_{:,2}=(U_0)_{:,2}$ and $(U_1)_{:,3}=(U_0)_{:,3}$), solve the linear system $\MA{c}{D_0\\\hline D_1}=\MA{c|c|c|c}{u_0&\vec{0}&u_2&u_3\\\hline \vec{0}&u_1&u_2&zu_3}\MA{c}{v_0\\\hline v_1\\\hline v_2\\\hline v_3}$ for $v_i$.

    \item Else if $\rank{V_0}=2 \ \land \ \rank{V_1}=2$:

    Enumerate all possible $(V_0,V_1)$, which fixes the row-vectors $v_i$;
    \begin{itemize}
        \item Notice that this step relies on $z$ being nonzero, as otherwise it is possible for $v_3$ to still have exponentially many possibilities.
    \end{itemize}

    for each pair, if it does not cause conflicts on $v_i$, solve $\MA{c|c}{D_0&D_1}=\MA{c|c|c|c}{u_0&u_1&u_2&u_3}\MA{c|c}{v_0&\vec{0}\\\hline \vec{0}&v_1\\\hline v_2&v_2\\\hline v_3&zv_3}$ for $u_i$.

    \item Else if $\rank{U_0}=2 \ \land \rank{V_1}=2$:

    Enumerate all possible $(U_0,V_1)$, which fixes all variables except $u_1$ and $v_0$;

    for each pair, solve $u_0v_0=D_0-u_2v_2-u_3v_3$ for $v_0$ and $u_1v_1=D_1-u_2v_2-zu_3v_3$ for $u_1$.

    \item Else: must have $\rank{U_1}=2 \ \land \rank{V_0}=2$:

    Enumerate all possible $(U_1,V_0)$;

    for each pair, solve $u_0v_0=D_0-u_2v_2-u_3v_3$ for $u_0$ and $u_1v_1=D_1-u_2v_2-zu_3v_3$ for $v_1$.
\end{itemize}

In each subcase, when enumerating matrix pairs, we apply Lemma \ref{not-fullrank} with $r=3$ and $r'=2$, so each matrix in the pair has $O(|\F|^6)$ many possibilities, resulting in the pair having $O(|\F|^{12})$ many possibilities.

\subsection{Runtime}
Since $R\le 4$, we will omit $R$ terms in some places to simplify analysis.

% \subsubsection{Matrix algorithm preliminaries}
To provide a good upper bound on the asymptotic time complexity of our algorithm, we take advantage of the fact that all matrices we deal with must have low rank in order for a solution to the original tensor decomposition problem to exist. For this, we utilize an efficient algorithm that returns full-rank factorizations for matrices of rank below some threshold and returns nothing otherwise.

\begin{lemma}
\label{fast-rankfac}
Given a matrix $M\in\F^{m\times n}$ and integer $r'$, we can do the following in $O(m^2+m(m+n)\min{r',\rank{M}})$ time:
\begin{itemize}
    \item If $\rank{M}\le r'$: return some $C\in\F^{m\times \rank{M}}$ and $F\in\F^{\rank{M}\times n}$ s.t. $M=CF$
    \item Else: return nothing
\end{itemize}
\end{lemma}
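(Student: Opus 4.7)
The plan is to produce the factorization by one run of Gaussian elimination on $M$, terminating early once more than $r'$ pivots have been uncovered. Specifically, I would maintain a list of pivot column indices together with a working matrix $M'$ obtained from $M$ by the row operations performed so far, process the columns of $M'$ from left to right, and for each column $c$ try to find an unused row $p$ with $M'_{p,c}\ne 0$: if one exists, scale its row so the entry becomes $1$, eliminate column $c$ from every other row, and mark $p$ as the pivot row for $c$. The moment the count of pivot columns reaches $r'+1$, I abort and return nothing, since we have certified $\rank{M}>r'$.

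If the procedure finishes with $r\le r'$ pivots, then $M'$ equals $\rref{M}$ (up to a permutation of rows), giving $F:=\rref{M}_{:r,:}$, and I would read off $C$ directly from the \emph{original} matrix by setting $C:=M_{:,[c_1,\dots,c_r]}$. Correctness is immediate: each $F_{:,c_j}$ is the standard basis vector $e_j$, so $(CF)_{:,c_j}=C_{:,j}=M_{:,c_j}$; for any non-pivot column $c$, the entries of $F_{:,c}$ record the unique coefficients expressing $M_{:,c}$ as a linear combination of the pivot columns of $M$ (because row operations preserve linear relations among columns), so $(CF)_{:,c}=M_{:,c}$ as well. This matches the full-rank factorization guaranteed by Lemma \ref{fullrank}.

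For the runtime, each pivot step updates $O(m)$ rows of combined effective length $O(m+n)$ (the $n$ residual entries together with the $O(m)$ bookkeeping needed to track which original rows carry which pivots), giving a per-pivot cost of $O(m(m+n))$. Since the loop triggers at most $\min{r',\rank{M}}+1$ pivot operations, and since initializing this bookkeeping and writing out the $m\times r$ matrix $C$ costs $O(m^2)$ in the worst case, the total matches the claimed $O(m^2+m(m+n)\min{r',\rank{M}})$.

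The main obstacle I foresee is the runtime accounting: one must avoid inverting the accumulated transformation $T$ satisfying $TM=\rref{M}$, since that would cost $\Omega(m^3)$. The trick described above is to sidestep $T^{-1}$ entirely and to read $C$ straight off the untouched pivot columns of $M$, using that the relation $CF=M$ is solved trivially at the columns where $F$ equals a standard basis vector, so no matrix inversion is ever performed and the per-pivot work stays within $O(m(m+n))$.
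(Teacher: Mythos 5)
Your proof is correct, and it takes a genuinely different route from the paper's. The paper maintains the invariant $CF=M$ throughout the elimination: it initializes $C\gets I_m$, $F\gets M$, and every row operation applied to $F$ is matched by the inverse column operation applied to $C$ (a column swap for a row swap, scaling column $r$ by $\sigma$ when row $r$ is scaled by $1/\sigma$, and adding $sC_{:,k}$ to $C_{:,r}$ when $sF_{r,:}$ is subtracted from $F_{k,:}$); the per-pivot cost $O(m(m+n))$ is then the $O(m)$ row operations on $F$ (length $n$) together with the $O(m)$ column operations on $C$ (length $m$). You instead run ordinary elimination to $\rref{M}$ and at the end read $C$ straight off the pivot columns of the original $M$, justified by the observation that invertible row operations preserve the column relations of a matrix. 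That observation is exactly right and makes the correctness argument a clean one-liner; it also means your algorithm spends only $O(mn)$ per pivot (no column operations on a companion matrix), so your bound is actually a bit tighter than what you claimed.

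Two small wrinkles worth flagging. First, the phrase ``read $C$ straight off the untouched pivot columns of $M$'' should not suggest that elimination leaves those columns alone --- it does not, it turns them into standard basis vectors in the working matrix --- what you actually need is that the \emph{original} $M$ is retained (either by working on a separate copy $M'$, as you say, or by never overwriting $M$); this is a slightly different memory profile from the paper's in-place variant, but it stays within the stated time bound whenever $\min{r',\rank{M}}\ge 1$. Second, the runtime paragraph attributes an $O(m)$ ``bookkeeping'' term to each row operation; in your scheme the bookkeeping (a pivot-column list and a used-row marker) is $O(1)$ per pivot step, not $O(m)$ per row, so the per-pivot cost is really $O(mn)$ rather than $O(m(m+n))$. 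This only makes your estimate conservative, so the claimed bound still holds, but the reason your total matches $O(m^2+m(m+n)\min{r',\rank{M}})$ is that the bound is slack for your algorithm, not that you incur the same $O(m(m+n))$ cost the paper does.
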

\begin{proof}
We run Gaussian elimination with early termination:
\begin{enumerate}
    \item initialize $(C,F)\gets (I_m,M)$ and $r\gets 0$:
    \item for $j$ from $0$ to $n-1$, inclusive:
    \begin{enumerate}
        \item if there is no $r\le i<m$ s.t. $F_{i,j}\ne 0$: continue to the next iteration of the loop
        \item else:
        \begin{enumerate}
            \item find some $r\le i<m$ s.t. $F_{i,j}\ne 0$
            
            \item swap rows $i$ and $r$ of $F$
            \item swap columns $i$ and $r$ of $C$

            \item let $\sigma=F_{r,j}$
            \item set $F_{r,:} \gets F_{r,:}/\sigma$
            \item set $C_{:,r} \gets \sigma C_{:,r}$
            
            \item for $k$ from $r+1$ to $m-1$, inclusive:
            \begin{enumerate}
                \item let $s=F_{k,j}$
                \item set $F_{k,:}\gets F_{k,:}-s F_{r,:}$
                \item set $C_{:,r}\gets C_{:,r}+s C_{:,k}$
            \end{enumerate}
            \item increment $r$
            \item if $r>r'$: terminate and return nothing
        \end{enumerate}
    \end{enumerate}
    \item return $(C_{:,:r},F_{:r,:})$
\end{enumerate}

If there was no early termination (ix.), this algorithm would transform $F$ into row echelon form (not necessarily \textit{reduced} row echelon form), so it would reach step (b) exactly $\rank{M}$ many times. Thus, with early termination, the algorithm will reach step (b) $\min{r',\rank{M}}$ many times,  correctly terminating if $\rank{M}>r'$.

Finally, each execution of step (b) involves $O(m)$ many row operations in $F$ and column operations in $C$, each of which takes $O(m+n)$ time.
Since initializing $T$ takes $O(m^2)$ time, the total runtime is $O(m^2+m(m+n)\min{r',\rank{M}})$.
\end{proof}

% \subsubsection{Tensor decomposition runtime analysis}
Throughout our algorithm for tensor decomposition, we maintain the invariants that $T$ is invertible and $TM=F$. At the end of the algorithm, $F=\rref{M}$.

In our algorithm, we first extract a basis of tensor slices $T_{i,:,:}$ by row-reducing the $n\times n^2$ matrix $\~{T}:=\MA{c}{\vdots \\\hline \vec{T_{i,:,:}} \\\hline \vdots}_i$. We also check whether $\rank{\~{T}}\le R$ and immediately exit if this condition is not satisfied. Using Lemma \ref{fast-rankfac}, we can complete this entire step in $O(Rn^3)$ time, instead of the more na\"ive $O(n^4)$ time.

Then, for each possible $A'$, we solve a system of linear equations over $R$ many matrices $M_r$ of rank $\le 1$.
Since $A'$ has $R$ columns and $\le R$ rows, there are $O(|\F|^{R^2})$ many of them.

To solve the system over $M_r$, we first separate it into independent parts, of which there are $\le R$ many.
To solve each part, we consider the most difficult subcases for each $R$:
\begin{itemize}
    \item If $R\le 2$: all coefficient matrices $E$ are separable into single rows, and we must have $\rank{D_i}\le R$ in order for a solution to exist, so we can find a solution (or prove it does not exist) in $O(n^2)$ time.
    \item If $R=3$: the most difficult case occurs when $E=\M{1&&x\\&1&y}$ and at least one of $D_0$ or $D_1$ has rank 2, where we have to enumerate $O(|\F|^4)$ many full-rank factorizations of such a $D_i$.
    For each factorization of $D_0$ (or $D_1$), we check if $D_0-xM_2$ (or $D_1-yM_2$) has rank $\le 1$. Overall, solving this case takes $O(|\F|^4 n^2)$ time.
    \item If $R=4$: the most difficult case occurs when $E=\M{1&&w&x\\&1&y&z}$ and both $D_0$ and $D_1$ have rank 2, where we have to enumerate $O(|\F|^{12})$ many pairs of matrices in 4 subcases each.

    Generating each pair of matrices takes $O(n)$ time, and for each pair we must solve for $X$ in some linear system $AX=B$ (or its transpose), where $A$ has shape $O(n)\times O(R)$ and $X$ has shape $O(R)\times O(n)$.

    We can find $X$ with row reduction: find some invertible $T$ s.t. $TA=\rref{A}$ and extract a solution $X$ from the row-reduced augmented matrix $T\MA{c|c}{A&B}=\MA{c|c}{\rref{A}&TB}$, e.g., by checking for all-zeros rows in $\rref{A}$ and setting free variables of $X$ to 0.

    We can obtain such $T$ in $O(n^2R)$ time by modifying the algorithm in Lemma \ref{fast-rankfac}: remove early termination, maintain the inverse of the $C$ matrix by performing row operations on a separate matrix, and continue the row reduction so that the $F$ matrix is transformed into rref.

    Overall, solving this subcase takes $O(|\F|^{12} n^2)$ time.
\end{itemize}

Thus, we spend $O(|\F|^{g(R)}n^2)$ time solving each system over $M_r$, where $g(1)=0,\ g(2)=0,\ g(3)=4,\ g(4)=12$.

We can then upper bound the total runtime of our algorithm at $O\paren{Rn^3 + |\F|^{R^2}R\cdot |\F|^{g(R)}n^2}$, which for fixed $R$ is equivalent to $O\paren{n^3 + |\F|^{R^2+g(R)}n^2}$. Notice that the large constant is multiplied with $n^2$, not $n^3$. It may be possible to obtain tighter bounds through more careful analysis.

\section{With wildcards}
When certain elements in $T$ are wildcards (i.e., allowed to have arbitrary value), tensor decomposition becomes much more difficult. In this section we prove Theorem \ref{NAE}, i.e., rank-2 decomposition with wildcards over $\ZM{2}$ is NP-hard, via a reduction from NAE-3SAT.

Let $\mathcal{I}:=\M{\M{1&0\\0&0}&\M{0&0\\0&1}}$ be the $2\times 2\times 2$ tensor with 1s along the main diagonal and 0s everywhere else. It can be verified by brute force that the only rank-2 decompositions of $\mathcal{I}$ are $\M{1\\0}^{\times 3}+\M{0\\1}^{\times 3}$ and $\M{0\\1}^{\times 3}+\M{1\\0}^{\times 3}$ (even when allowing all-zero vectors). We can parameterize both of these solutions as $\M{a\\\neg a}^{\times 3}+\M{\neg a\\a}^{\times 3}$, yielding a variable gadget. By concatenating $n$ copies of $\mathcal{I}$ diagonally \footnote{For an example, the result of concatenating
$\M{\M{a_{0}&a_{1}\\a_{2}&a_{3}}&\M{a_{4}&a_{5}\\a_{6}&a_{7}}}$ and
$\M{\M{b_{0}&b_{1}\\b_{2}&b_{3}}&\M{b_{4}&b_{5}\\b_{6}&b_{7}}}$ diagonally is

\[
\M{
\M{a_{0}&a_{1}\\a_{2}&a_{3}\\&&&\\&&&}
&\M{a_{4}&a_{5}\\a_{6}&a_{7}\\&&&\\&&&}
&\M{&&&\\&&&\\&&&\\&&b_{0}&b_{1}\\&&b_{2}&b_{3}}
&\M{&&&\\&&&\\&&&\\&&b_{4}&b_{5}\\&&b_{6}&b_{7}}
},
\]

where the unmarked cells can be filled with arbitrary things.} to form the target tensor $T$ and setting all other elements to be wildcards, we force all rank-2 decompositions of $T$ to be of the form $\M{\vdots\\v_i\\\neg v_i\\\vdots}^{\times 3}+\M{\vdots\\\neg v_i\\v_i\\\vdots}^{\times 3}$, for arbitrary variables $v_i$.

% (For an example of concatenating 3D tensor diagonally, the result of concatenating

% $\M{\M{a_{0}&a_{1}&a_{2}\\a_{3}&a_{4}&a_{5}\\a_{6}&a_{7}&a_{8}}&\M{a_{9}&a_{10}&a_{11}\\a_{12}&a_{13}&a_{14}\\a_{15}&a_{16}&a_{17}}&\M{a_{18}&a_{19}&a_{20}\\a_{21}&a_{22}&a_{23}\\a_{24}&a_{25}&a_{26}}}$ and
% $\M{\M{b_{0}&b_{1}\\b_{2}&b_{3}}&\M{b_{4}&b_{5}\\b_{6}&b_{7}}}$ is

% \[
% \M{
% \M{a_{0}&a_{1}&a_{2}\\a_{3}&a_{4}&a_{5}\\a_{6}&a_{7}&a_{8}\\&&&&\\&&&&}
% &\M{a_{9}&a_{10}&a_{11}\\a_{12}&a_{13}&a_{14}\\a_{15}&a_{16}&a_{17}\\&&&&\\&&&&}
% &\M{a_{18}&a_{19}&a_{20}\\a_{21}&a_{22}&a_{23}\\a_{24}&a_{25}&a_{26}\\&&&&\\&&&&}
% &\M{&&&&\\&&&&\\&&&&\\&&&b_{0}&b_{1}\\&&&b_{2}&b_{3}}
% &\M{&&&&\\&&&&\\&&&&\\&&&b_{4}&b_{5}\\&&&b_{6}&b_{7}}
% },
% \]

% where the unmarked space is filled with zeros).

Since $\NAE{a,b,c}=\neg (abc \lor (\neg a)(\neg b)(\neg c))$, and the AND-clauses $abc$ and $(\neg a)(\neg b)(\neg c)$ cannot both be equal at the same time, $\NAE{a,b,c}=\neg(abc \oplus (\neg a)(\neg b)(\neg c))
=(abc+(\neg a)(\neg b)(\neg c)=0)$, where $\oplus$ denotes XOR and is isomorphic to addition mod 2.
Because $T$ already forces all elements of vectors in the second outer product to be negations of vectors in the first outer product, we can encode $\NAE{v_i,v_j,v_k}$ by setting $T_{2i,2j,2k}=0$; for negative literals, we add 1 to the corresponding tensor index, e.g., $\NAE{v_i,v_j,\neg v_k}$ can be encoded by setting $T_{2i,2j,2k+1}=0$.

Thus, we convert a NAE-3SAT instance with $n$ variables and $m$ clauses into a rank-2 decomposition instance on a $2n\times 2n\times 2n$ tensor with $8n+m$ fixed elements and all other elements being wildcards.

Over arbitrary finite fields, there is an issue of scale invariance: each outer product $a\times b\times c$ can be replaced with $(\alpha a)\times (\beta b)\times (\gamma c)$ for any $\alpha,\beta,\gamma$ s.t. $\alpha\beta\gamma=1$, which allows a single tensor decomposition to be transformed into many other tensor decompositions that evaluate to the same tensor. This scaling freedom makes constructing gadgets for NP-hardness more difficult.

\subsection{Rank 1 with wildcards}
For rank-\textit{1} decomposition with wildcards, we give a polynomial-time algorithm over a fixed finite field $\F$.

First, we rewrite the problem as follows: given an arbitrary subset $S\subseteq \brace{0,\dots n-1}^3$ of tensor indices and a set of elements $T_{i,j,k}\in\F \ \forall (i,j,k)\in S$, find variables $a_{0\dots n-1}, b_{0\dots n-1}, c_{0\dots n-1}$ s.t. $a_i b_j c_k=T_{i,j,k} \ \forall (i,j,k)\in S$.

When $\F=\Z/2\Z$, $abc=1$ implies $(a,b,c)=(1,1,1)$, so a solution exists if and only if setting
\[a_i=\lor_{j,k:(i,j,k)\in S} T_{i,j,k},\ 
b_j=\lor_{i,k:(i,j,k)\in S} T_{i,j,k},\ 
c_k=\lor_{i,j:(i,j,k)\in S} T_{i,j,k}\] yields a solution; this is because this construction produces the minimal set of elements in $a\times b\times c$ that are forced to be 1s by the values $T_{i,j,k}$, and this set covers all $(i,j,k)\in S$ that have $T_{i,j,k}=1$.
Thus for $\F=\Z/2\Z$ there is an algorithm for rank-1 wildcard decomposition that runs in $O(n^3)$ time with low constant factor.

% We briefly describe a polynomial time algorithm for rank-1 wildcards decomposition. This problem is equivalent to the following: given an arbitrary subset $S\subseteq \brace{0,\dots n-1}^3$ and set of elements $T_{i,j,k}\in\F \ \forall (i,j,k)\in S$, find variables $a_{0\dots n-1}, b_{0\dots n-1}, c_{0\dots n-1}$ s.t. $a_i b_j c_k=T_{i,j,k} \ \forall (i,j,k)\in S$.

For an arbitrary fixed finite field $\F$, we outline a polynomial-time algorithm.

\noindent Let $S':=\brace{(i,j,k)\in S:T_{i,j,k}\ne 0}$ denote all points with nonzero values and $I:=\brace{i:\exists (i,j,k)\in S'}$, $J:=\brace{j:\exists (i,j,k)\in S'}$, $K:=\brace{k:\exists (i,j,k)\in S'}$ denote the sets of coordinate values of the nonzero points along each dimension. Since $T_{i,j,k}\ne 0 \Rightarrow a_i,b_j,c_k\ne 0$, we have that $a_i\ne 0 \ \forall i\in I$, $b_j\ne 0 \ \forall j\in J$, and $c_k\ne 0 \ \forall k\in K$, so we check whether $T_{i,j,k}\ne 0 \ \forall (i,j,k)\in (I\times J\times K)\cap S$.
If this check fails, there is no solution.

Otherwise, we observe that all the equations $a_i b_j c_k=T_{i,j,k}$ involve multiplications only. This suggests we replace all scalars with exponentials under the same base, as this turns multiplications into additions.

Indeed, it is well-known that the multiplicative group of any finite field is cyclic, so there exists some $\mu\in\F$ s.t. its powers cover all of $\F\setminus\brace{0}$. We can then substitute the variables $a_i=\mu^{p_i},\ b_j=\mu^{q_j},\ c_k=\mu^{r_k}$ and tensor elements $T_{i,j,k}=\mu^{t_{i,j,k}}$ to get the equations $p_i+q_j+r_k \equiv t_{i,j,k} \ (\bmod |\F|-1) \ \forall (i,j,k)\in S'$. Thus, solving our problem is equivalent to solving a linear system over $\ZM{(|\F|-1)}$.

From \cite{lin-mod}, solving the matrix-vector equation $xA=b$ over $\ZM{N}$ for $n\times m$ $A$ can be done in $O(nm^{\omega-1})$ many operations, where $\omega\le 2.371552$ \cite{mm-record} is the exponent for matrix multiplication, with all integers throughout the algorithm having magnitude $\le N$. % ; thus the bit complexity is $\~{O}(nm^{\omega-1})$, where $\~O$ ignores $\poly(\log N)$ factors.
% (The $\log N \log\log N$ terms comes from an algorithm for multiplying two $b$-bit numbers in $O(b\log b)$ time \cite{int-mul}.)
Since our system consists of $|S'|$ equations over $3n$ variables, it can be solved in $O(n |S'|^{\omega-1})$ operations, which is within $O(n^{3\omega -2})$ since $|S'|\le n^3$.

The main bottleneck in this algorithm is finding a primitive element of $\F$, for which an efficient algorithm is not known \cite{primitive-elem}.
On the other hand, for $\F=\F_{p^k}$, the total input size of a problem instance would be linear w.r.t. $\log(p)$ and $k$ \footnote{Every element in $\F_{p^k}$ is a linear combination of $1,\alpha,\dots,\alpha^{k-1}$ with coefficients in $\ZM{p}$, for some element $\alpha$, so every element can be specified with a $k$-element vector over $\ZM{p}$.}. Because of this, we are not sure what the computational hardness of rank-1 wildcard decomposition is when $p,k$ are not fixed.

% , finding a generator (a.k.a. ``primitive element") even when $\F=\ZM{p}$ for prime $p$ is a difficult problem, as the fastest known algorithm involves finding the prime factorization of $p-1$.

\subsection{Rank-1 with wildcards for matrices}
One may wonder whether \textit{matrix} rank factorization with wildcards is NP-hard. The issue is that matrix factorizations have an immense degree of freedom, as $CF=M$ implies $(CX^{-1})(XF)=M$, so constructing gadgets seems very difficult even for rank $R=2$.

Additionally, for rank $R=1$, we sketch an algorithm that runs in $O(n^3)$ time even when $\F$ is not fixed (assuming each operation between elements of $\F$ runs in $O(1)$ time), in contrast to the analogous problem for 3D tensors.
Again we formulate the problem as follows: given a subset $S\subseteq\brace{0,\dots n-1}^2$ and set of elements $T_{i,j}\in\F \ \forall (i,j)\in S$, find variables $a_{0,\dots n-1}, b_{0,\dots n-1}$ s.t. $a_i b_j=T_{i,j}$.

Like before, define $S':=\brace{(i,j)\in S: T_{i,j}\ne 0}$, $I:=\brace{i:(i,j)\in S'}$, $J:=\brace{j:(i,j)\in S'}$. Then $a_i\ne 0 \ \forall i\in I$ and $b_j\ne 0 \ \forall j\in J$, so we check whether $T_{i,j}\ne 0 \ \forall (i,j)\in (I\times J)\cap S$; if the check fails then there is no solution.

Otherwise, we notice that since every equation $a_i b_j=T_{i,j}$ constrains the product of two variables, knowing one variable immediately forces the value of the other variable.
% for any $i,i',j$ s.t. $(i,j),(i',j)\in S'$, $T_{i,j}/T_{i',j}=a_i/a_{i'}$, which is independent of $j$, and likewise $\forall i,j,j'$ s.t. $(i,j),(i,j')\in S'$, $T_{i,j}/T_{i,j'}=b_j/b_{j'}$.
This motivates us to construct a graph $G$ with vertex set $S'$ where there is an edge between every pair of vertices $(i,j),(i',j')$ s.t. $i=i'$ or $j=j'$.

Choose an arbitrary $(i_0,j_0)\in S'$: WLOG we can set $a_{i_0}=1, b_{j_0}=T_{i_0,j_0}$ because any solution $(a,b)$ to the original problem can be scaled to $(\alpha a,\frac{1}{\alpha} b)$ for any $\alpha\ne 0$. Then do a breadth-first search over $S'$ starting from $(i_0,j_0)$: when we visit each new vertex $(i,j)\in S$, we will have already determined $a_i$ and $b_j$, so for each neighbor of the form $(i',j)$ we set $a_{i'}=T_{i',j}/b_j$, and for each neighbor of the form $(i,j')$ we set $b_{j'}=T_{i,j'}/a_i$. Each neighbor that was not already visited is added to a queue of vertices to visit.
Note that because every $T_{i,j}$ we deal with is nonzero, we will never encounter division by 0 in this procedure.

If at any point we update a variable $a_i$ or $b_j$ that already has a assigned value and set it to a different value, then we have encountered an inconsistency and there is no solution to the original problem. Otherwise, at the end of this procedure we will have searched through the entire connected component of $G$ that contains $(i_0,j_0)$, so we do another breadth-first search starting at an arbitrary unvisited vertex in $G$, and repeat until all vertices in $G$ have been visited. If no inconsistencies were ever encountered, we can set all variables $a_i$ and $b_j$ that were not already assigned a value to 0, and we have a solution.

Since breadth-first search over a graph with $V$ vertices and $E$ edges takes $O(V+E)$ time, and $G$ has $O(n^2)$ vertices and $O(n^3)$ edges, our algorithm runs in $O(n^3)$ time.

The key for correctness is that if two vertices $(i,j),(i',j')$ are in different connect components, then $(i',j')$ is not adjacent to any vertex $(x,y)$ in the connected component containing $(i,j)$, so $i'\ne x$ and $j'\ne y$. This means that $a_{i'}$ and $b_{j'}$ are not directly or indirectly involved in any equations with elements $T_{x,y}$ for each $(x,y)$ in the connected component of $(i,j)$. Thus, after we finish searching over each connected component, all remaining unassigned variables $a_i$ and $b_j$ are completely independent of all previously assigned variables, so as we search through the next connected component we will never encounter inconsistencies with variables assigned during a previous connected component. This independence between connected components also allows us to assume WLOG that $a_{i_0}=1$ and $b_{j_0}=T_{i_0,j_0}$ at each individual connected component.

If we tried generalizing this algorithm to 3D tensors, we would encounter a problem: two vertices would have to match 2 out of 3 coordinates in order to be adjacent to each other, because each equation $a_i b_j c_k=T_{i,j,k}$ constrains the product of three variables instead of two, so two variables have to be known beforehand to force the value of the third one.
But then if $(i,j,k)$ and $(i',j',k')$ occupy different components, it is still possible for $(i',j',k')$ to share 1 out of 3 coordinates with a vertex contained in the connected component containing $(i,j,k)$, so different connected components can still share common variables and thus interact with each other in a nontrivial manner.

% then within each connected component $C\subseteq G$, we ensure that for each pair $(i,i')$, we have that all $T_{i,j}/T_{i',j}$ are equal to some value $\alpha_{i,i'}$ $\forall j$ s.t. $(i,j),(i',j)\in S'$,
% and likewise for each pair $(j,j')$ we have that all $T_{i,j}/T_{i,j'}$ are equal to some value $\beta_{j,j'}$ $\forall i$ s.t. $(i,j),(i,j')\in S'$.

% If the above checks pass, we then check that the ratios $\alpha_{i,i'}, \beta_{j,j'}$ are consistent.
% For each connected component $C\subseteq G$,% let $I_C:=\brace{i:(i,j)\in C}, J_C:=\brace{j:(i,j)\in C}$. 
% choose an arbitrary $(i_0,j_0)\in C$, create dictionaries $\kappa,\lambda$ indicating the ratios of each $a_i$ and $b_j$ to $a_{i_0}$ and $b_{j_0}$, and set $\kappa[i_0]=1, \lambda[j_0]=1$. Breadth-first search starting from $(i_0,j_0)$, where at each new vertex $(i,j)$ we iterate over each of its neighbors $(i',j')\in C$ that have not already been visited and update $\kappa[i']=\kappa[i]/\alpha_{i,i'}, \lambda[j']=\lambda[j]/\beta_{j,j'}$; if at any point we update a key in $\kappa$ or $\lambda$ that already existed and we replace the value it mapped to with a different value, we have detected an inconsistency and there is no solution.

% If no inconsistencies happen, 

\section{Future directions}
We conclude with some open questions:
\begin{itemize}
    \item For what $R\ge 5$ can rank-$R$ tensor decomposition (without wildcards) be computed in polynomial time over a fixed finite field?
    \item For $R\le 4$, how low can the constant factor of rank-$R$ decomposition be w.r.t. to the size of the finite field and w.r.t. $R$?
    \item How asymptotically low can the number of wildcards in a tensor be while keeping rank-2 wildcard decomposition NP-hard?
    \item What is the parameterized complexity of rank-1 and rank-2 wildcard decomposition w.r.t. the number of wildcard elements in the target tensor?
    \item Can we have efficient tensor decomposition over number systems beyond finite fields, such as the integers?
\end{itemize}

\section*{Acknowledgments}
We thank Prof. Erik Demaine and TAs Josh Brunner, Lily Chung, and Jenny Diomidova for running MIT 6.5440 Fall 2023, which made this project possible. We also thank them for giving feedback on an earlier version of this report, some of which were requests for certain details and open questions to be addressed, including rank-1 wildcard decomposition.


\begin{thebibliography}{999}
    \bibitem{adaflip} Arai, Y., Ichikawa, Y., \& Hukushima, K. (2023).
    Adaptive Flip Graph Algorithm for Matrix Multiplication.
    \url{https://arxiv.org/abs/2312.16960}
    
    \bibitem{survey} Bl\"aser, M. (2013).
    Fast Matrix Multiplication.
    \textit{Theory of Computing.}
    \url{https://theoryofcomputing.org/articles/gs005/} 

    \bibitem{alphatensor} Fawzi, A., Balog, M., Huang, A., Hubert, T., Romera-Paredes, B., Barekatain, M., Novikov, A., Ruiz, F. J. R., Schrittwieser, J., Swirszcz, G., Silver, D., Hassabis, D., \& Kohli, P. (2022).
    Discovering faster matrix multiplication algorithms with reinforcement learning.
    \textit{Nature, 610}, 47-53.
    \url{https://doi.org/10.1038/s41586-022-05172-4}

    \bibitem{hardness} H\r{a}stad, J. (1990).
    Tensor rank is NP-complete.
    \textit{Journal of Algorithms, 11}(4), 644-654.
    \url{https://doi.org/10.1016/0196-6774(90)90014-6}

    \bibitem{flip} Kauers, M., \& Moosbauer, J. (2022).
    Flip graphs for matrix multiplication.
    \url{https://arxiv.org/abs/2212.01175}

    \bibitem{lin-mod}
    Storjohann, A., \& Mulders, T. (1998).
    Fast Algorithms for for Linear Algebra Modulo $N$.
    \textit{European Symposium on Algorithms}, 139-150.
    \url{https://doi.org/10.1007/3-540-68530-8_12}

    \bibitem{primitive-elem}
    Von Zur Gathen, J., \& Shparlinski, I. (1996).
    Orders of Gauss Periods in Finite Fields.
    \textit{Applicable Algebra in Engineering, Communication and Computing, 9}, 15-24.
    \url{https://doi.org/10.1007%2Fs002000050093}

    \bibitem{mm-record}
    Williams, V. V., Xu, Y., Xu, Z., \& Zhou, R. (2023).
    New Bounds for Matrix Multiplication: from Alpha to Omega.
    \url{https://arxiv.org/abs/2307.07970}
\end{thebibliography}
\end{document}